\newcommand{\A}{{\mathcal A}}
\newcommand{\B}{{\mathcal B}}
\newcommand{\C}{{\mathcal C}}
\newcommand{\E}{{\mathcal E}}
\newcommand{\F}{{\mathcal F}}
\newcommand{\HH}{{\mathcal H}}
\newcommand{\Q}{{\mathcal Q}}
\newcommand{\vu}{{\sf u}}
\newcommand{\vv}{{\sf v}}
\newcommand{\vx}{{\sf x}}
\newcommand{\vy}{{\sf y}}
\newcommand{\bbZ}{{\mathbb Z}}
\newtheorem{corollary}{Corollary}
\newtheorem{definition}{Definition}
\newtheorem{lemma}{Lemma}
\newtheorem{theorem}{Theorem}
\newtheorem{problem}{Problem}
\newtheorem{example}{Example}
\begin{document}

\title{Optimal Memoryless Encoding for Low Power Off-Chip Data Buses}

\numberofauthors{3}

\author{
\alignauthor Yeow Meng Chee\titlenote{Y. M. Chee is also with Card View Pte. Ltd.,
41 Science Park Road, \#04-08A The Gemini, Singapore Science Park II, Singapore 117610.} \\
	\affaddr{School of Physical \& Mathematical Sciences} \\
	\affaddr{Nanyang Technological University} \\
	\affaddr{Singapore 637616} \\
	\email{ymchee@alumni.uwaterloo.ca}
\alignauthor Charles J. Colbourn \\
	\affaddr{Department of Computer Science \& Engineering} \\
	\affaddr{Arizona State University} \\
	\affaddr{Tempe, Arizona 85287-8809} \\
	\email{charles.colbourn@asu.edu}
\alignauthor Alan C. H. Ling \\
	\affaddr{Department of Computer Science} \\
	\affaddr{University of Vermont} \\
	\affaddr{Burlington, Vermont 05405} \\
	\email{aling@emba.uvm.edu}
}

\maketitle

\begin{abstract}
Off-chip buses account for a significant portion of the total system power consumed in
embedded systems. Bus encoding schemes have been proposed
to minimize power dissipation, but none has been demonstrated
to be optimal with respect to any measure. In this paper, we give the first
provably optimal and explicit (polynomial-time constructible) families of memoryless
codes for minimizing bit transitions in off-chip buses. Our results imply that having access
to a clock does not make a memoryless encoding scheme that minimizes bit transitions more powerful.
\end{abstract}

\section{Introduction}

Energy efficiency is an important product quality
characteristic. For mobile applications, such as handheld and wireless devices,
it not only impacts the usability and acceptance directly, but also
affects reliability and packaging cost of the product. Consequently, design
techniques for minimizing system power consumption are important
for achieving high product quality.

Power-efficient design requires the reduction of power dissipation throughout the design,
during all stages of the design process, subject to constraints on system
performance and quality of service. In CMOS circuits, most power is dissipated as
dynamic power for charging and discharging of internal node capacitances.
Thus, researchers have investigated techniques for minimizing the number of
transitions inside the circuits. The power dissipation at the input/output (I/O) pads
of an integrated circuit (IC) is even higher because off-chip buses have switching
capacitances that are orders of magnitude greater than those internal to a chip.
The power dissipated at the I/O pads of an IC ranges from 10\% to 80\% of the
total power dissipation with a typical value of 50\% for circuits optimized
for low power \cite{StanBurleson:1995b}. The concern of this paper is with
low power encoding for off-chip buses.

Bus encoding is used to reduce power dissipated on the bus lines. It has been
shown to be an effective technique for power reduction. Compared to on-chip
buses of deep submicron (DSM) circuits, the ratio of inter-wire capacitances
(or crosstalk coupling) to substrate capacitance is much lower in off-chip buses.
The energy model for off-chip buses is thus simpler. We follow the model
discussed by Catthoor {\em et al.} \cite{Catthooretal:1998}. For a bus with
$k$ wires, each one having metal interconnect capacitance of $C$, and under
a supply voltage $V_{dd}$, the total energy consumed by the bus for a
computation of $N$ cycles can be estimated as
\begin{equation}
\label{energyequation}
E_{bus} = N \cdot k \cdot C \cdot V_{dd}^2.
\end{equation}
The average power can then be obtained by multiplying $E_{bus}$ by $f$, the frequency
at which the bus operates. However, the derivation of (\ref{energyequation}) assumes that for each
cycle, all $k$ bus wires exhibit a transition that causes the corresponding
capacitance to switch. In general, not all the wires switch; hence a cycle-based model
is more accurate, as in:
\begin{equation}
\label{newenergyequation}
E_{bus} = \left( \sum_{i=1}^N k_i\right) \cdot C \cdot V_{dd}^2,
\end{equation}
where $k_i$ is the number of wires switching at cycle $i$.

The only freedom in (\ref{newenergyequation}) for reducing energy consumption
is $k_i$, since $N$ depends on the application running on the processor, and $C$
and $V_{dd}$ depend on the technology. Therefore, bus encoding techniques for
reducing energy consumption have focused on reducing $k_i$, the bit switching
activity of the off-chip bus. However, no explicit encoding scheme with provable
optimality (with respect to any formal measure) is known. In this paper, we give
the first explicit and provably optimal memoryless encoding schemes (with respect to rate
and maximum energy consumed per cycle), both stateless and
stateful, for off-chip buses.

\section{Mathematical Preliminaries}

\subsection{Codes}

The {\em Hamming $n$-space} is the set $\HH(n)=\{0,1\}^n$ endowed with the
{\em Hamming distance} $d_H$ defined as follows:
for $\vu,\vv\in\HH(n)$, $d_H(\vu,\vv)$ is the number of positions where $\vu$
and $\vv$ differ. The {\em Hamming weight} (for short, {\em weight}) of a vector
$\vu\in\HH(n)$ is the number of positions in $\vu$ with non-zero value, and is
denoted $w_H(\vu)$. The $i$th component of $\vu$ is denoted $\vu_i$. The
{\em support} of a vector $\vu\in\HH(n)$, denoted supp$(\vu)$, is the set
$\{i:\vu_i=1\}$.

Any subset of $\HH(n)$ is called a {\em code of length $n$}. A
{\em constant weight code of length $n$ and weight $w$} is any subset of
$\HH(n,w)=\{\vu\in\HH(n):w_H(\vu)=w\}$. The elements of a code are called
{\em codewords}. Let $\C\subseteq\HH(n)$ be a code. The {\em size} of $\C$
is $|\C|$, the number of codewords in the code. The {\em diameter} of $\C$ is
diam$(\C)=\max_{\vu,\vv\in\C} d_H(\vu,\vv)$. Given two codes
$\C_1,\C_2\subseteq\HH(n)$, the {\em cross diameter} of $\C_1$ and $\C_2$
is crossdiam$(\C_1,\C_2)=\max_{\vu\in\C_1,\vv\in\C_2} d_H(\vu,\vv)$.
By definition, crossdiam$(\C,\C)={\rm diam}(\C)$.

\subsection{Set Systems}

For integers $i<j$, the set $\{i,i+1,\ldots,j\}$ is abbreviated as $[i,j]$. Moreover,
we also write $[j]$ for $[1,j]$. For a finite set $X$ and $k\leq |X|$, we define
$2^X=\{A:A\subseteq X\}$ and ${X\choose k}=\{A\in2^X:|A|=k\}$. The ring
$\bbZ/n\bbZ$ is denoted $\bbZ_n$.

A {\em set system of order $n$} is a pair $(X,\A)$, where $X$ is a finite set
of $n$ {\em points} and $\A\subseteq 2^X$. The elements of $\A$ are called
{\em blocks}. A set system is said to be $k$-{\em uniform} if $\A\subseteq{X\choose k}$.

Let $([n],\A)$ be a set system. The {\em incidence vector} of a block $A\in\A$ is the
vector $\iota(A)\in\HH(n)$, where
\begin{equation*}
\iota(A)_i =  
\begin{cases}
1,& \text{if $i\in A$; and} \\
0,& \text{otherwise.}
\end{cases}
\end{equation*}
There is a natural correspondence between the Hamming $n$-space and the
{\em complete} set system of order $n$, $([n],2^{[n]})$: the positions of vectors
in $\HH(n)$ correspond to points in $[n]$, a vector $\vu\in\HH(n)$ corresponds to
the block supp$(\vu)$, and $d_H(\vu,\vv)=|{\rm supp}(\vu)\Delta{\rm supp}(\vv)|$.
From this, it follows that there is a bijection between the set of all codes of length
$n$ and the set of all set systems of order $n$. There is also a bijection between
the set of all constant weight codes of length $n$ and weight $w$ and the
set of all $w$-uniform set systems of order $n$. So we may speak of
{\em the set system of a code}, or {\em the code of a set system}.

\section{Problem Formulation}

Let $s\geq 1$ and $n\geq k$. An $s$-{\em state} $n$-{\em bit encoding scheme
for a source} $S\subseteq\HH(k)$ is a triple
$\E=\langle \C,E,D\rangle$, where
\begin{enumerate}
\item $\C$ is a code of length $n$,
\item $E:S\times\bbZ_s\rightarrow \C$
is an injective map called an {\em encoding function}, and
\item $D:\C\times\bbZ_s\rightarrow S$
is a surjective map called a {\em decoding function},
\end{enumerate}
such that $D(E(\vu))=\vu$ for all $\vu\in S$.
The encoding function induces a subscode $\C_i\subseteq\C$ for each $i\in\bbZ_s$, defined by
\begin{equation*}
\C_i = \{\vv\in\C : \text{$\vv=E(\vu,i)$ for some $\vu\in S$}\}.
\end{equation*}
Conversely, $\{{\C_i}, i\in\bbZ_s\}$ uniquely identifies the class of functions that
$E$ (and hence $D$) can belong to:
\begin{align*}
E \in
& \{ f:S\times\bbZ_s\rightarrow \C :  \\
& ~~~~~~~~~~~~~\text{$f(\cdot,i):S\rightarrow \C_i$
is injective for all $i\in\bbZ_s$}\}.
\end{align*}

\begin{figure}
\centering
\includegraphics[width=2.5in]{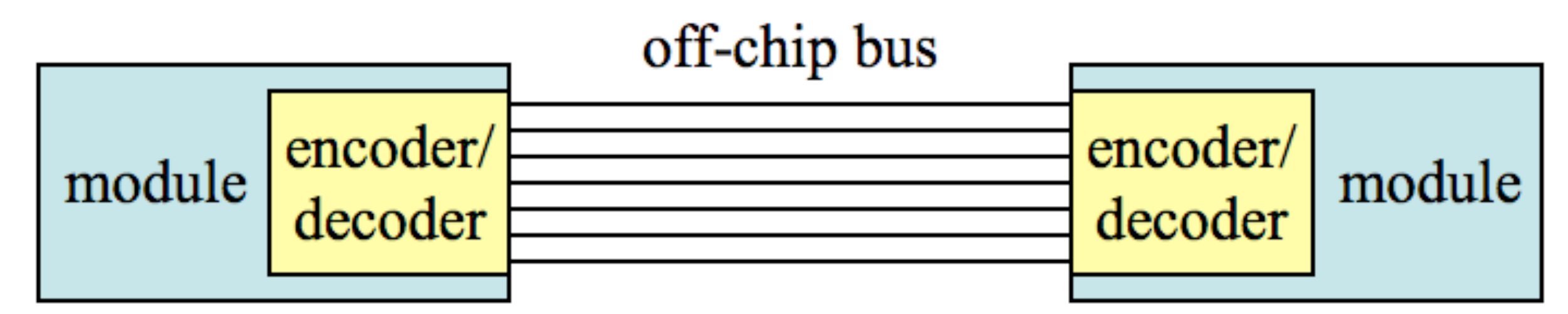}
\caption{Non-adaptive encoding for off-chip bus}
\label{nonadaptive}
\end{figure}
Encoding and decoding modules that implement  $E$ and $D$,
respectively, are inserted at the interface of the bus (see Fig. \ref{nonadaptive}).
Suppose that $\vu,\vv\in S$ are two words to be communicated
across the bus at steps $t$ and $t+1$ of a computation. In the absence of a bus
encoding scheme, the number of bit switchings in this computation cycle is
$|\{i:\vu_i\not=\vv_i\}|$. This quantity is precisely $d_H(\vu,\vv)$, which can be as high as $k$.
If an $s$-state $n$-bit encoding scheme is used, then $\vu'=E(\vu,i)$
and $\vv'=E(\vv,j)$ are communicated instead in steps $t$ and $t+1$
of the computation, where $i=t\pmod{s}$ and $j=t+1\pmod{s}$.
The resulting number of bit switchings in this computation cycle is
therefore $d_H(\vu',\vv')$, which is bounded above by
crossdiam$(\C_i,\C_j)$.

To ensure that the encoding scheme guarantees performance under as general
a condition as possible, we adopt a worst-case analysis model. 
Indeed for a given performance guarantee $\delta\leq k$, we require that
crossdiam$(\C_i,\C_{i+1})\leq \delta$ for all $i\in\bbZ_s$. We call such a
family of codes $(\C_i)_{i\in\bbZ_s}$ an {\em $(n,\delta)_s$-low power} (LP) {\em code}.

A 1-state encoding scheme is known as a {\em stateless} encoding scheme since
it does not need to know the state (step number) of the computation in order to encode
the word on the bus. In this case, an $(n,\delta)_1$-LP code is a code $\C$
such that diam$(\C)\leq\delta$. Any $s$-state encoding scheme with $s\geq 2$ is called
{\em stateful}. While a stateless encoding scheme does not need access to a clock, a stateful encoding scheme would require access to a clock to know the particular computation cycle.

\begin{figure}
\centering
\includegraphics[width=2.5in]{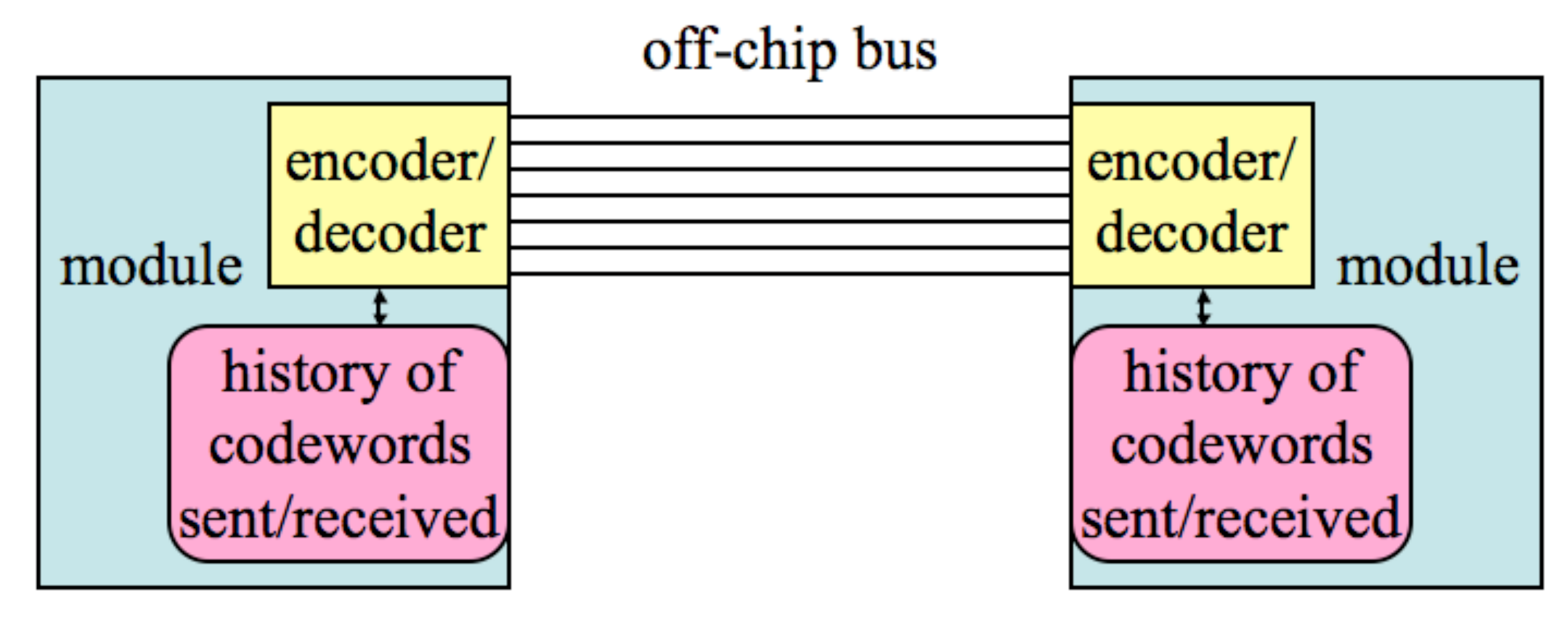}
\caption{Adaptive encoding for off-chip bus}
\label{adaptive}
\end{figure}
The encoding schemes that we consider are {\em non-adaptive}:
The choice of codeword to transmit across the bus in step $t$ does not depend
on codewords that have been transmitted in steps $i$, $i<t$. A diagram
depicting {\em adaptive} encoding is given in Fig. \ref{adaptive}.
Both non-adaptive and adaptive encoding schemes for low-power data buses have been considered (see, for example,
\cite{Beninietal:1997,ChengPedram:2001,ChengPedram:2002,Fornaciarietal:2000,Lindkvistetal:2004,MurgaiFujita:1999,PatelMarkov:2004,Shinetal:1998,StanBurleson:1995a,StanBurleson:1995b,
Subrahmanyaetal:2004}).
Adaptive encoding schemes  have better rates.
However non-adaptive encoding schemes are often simpler to implement
since they do not need to have a continuously changing data model, and do not
require memory (to track history of transmitted codewords). Thus non-adaptive encoding schemes are also known as {\em memoryless} encoding schemes.
The focus of this paper is on memoryless encoding schemes.

The {\em wire expansion} of $\E$, an $n$-bit encoding scheme for a source $S\subseteq\HH(k)$,
is the quantity $\alpha(\E)=n/k$. Wire expansion should be minimized since
it affects the system area. Hence, given $\delta\leq k$, we want to find an
$(n,\delta)_s$-LP code $(\C_i)_{i\in\bbZ_s}$ with the smallest $n$ such that
$|\C_i|\geq 2^k$, for all $i\in\bbZ_s$. In this way we arrive at the following equivalent problem:

\begin{problem}
Given $n$, $\delta$, and $s$, find an $(n,\delta)_s$-LP code
$(\C_i)_{i\in\bbZ_s}$ maximizing $\min_{i\in\bbZ_s}|\C_i|$.
\end{problem}

\noindent Let
\begin{equation*}
F_s(n,\delta) =
 \max_\text{$(n,\delta)_s$-LP code $(\C_i)_{i\in\bbZ_s}$} \ 
\min_{i\in\bbZ_s} \ \ |\C_i| .
\end{equation*}
An $(n,\delta)_s$-LP code $(\C_i)_{i\in\bbZ_s}$ such that
$\min_{i\in\bbZ_s} |\C_i| =F_s(n,\delta)$ is called
{\em optimal}.

\begin{lemma}
\label{statereduction}
$F_s(n,\delta)\leq F_2(n,\delta)$ for all $s\geq 2$.
\end{lemma}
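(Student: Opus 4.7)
The plan is to show $F_s(n,\delta)\leq F_2(n,\delta)$ by \emph{restricting} any $(n,\delta)_s$-LP code to two of its consecutive subcodes, thereby producing an $(n,\delta)_2$-LP code of at least the same minimum subcode size.

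Concretely, start from an optimal $(n,\delta)_s$-LP code $(\C_i)_{i\in\bbZ_s}$, so $\min_{i\in\bbZ_s}|\C_i| = F_s(n,\delta)$. Pick an index $j\in\bbZ_s$ that attains this minimum, i.e. $|\C_j|=F_s(n,\delta)$, and define the pair $(\C'_0,\C'_1)=(\C_j,\C_{j+1})$ indexed by $\bbZ_2$. To verify that this pair is an $(n,\delta)_2$-LP code, I need to check that ${\rm crossdiam}(\C'_a,\C'_{a+1})\le\delta$ for every $a\in\bbZ_2$. In $\bbZ_2$ there are only two such conditions, ${\rm crossdiam}(\C'_0,\C'_1)\le\delta$ and ${\rm crossdiam}(\C'_1,\C'_0)\le\delta$. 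Since the Hamming distance is symmetric, ${\rm crossdiam}$ is symmetric in its two arguments, so both reduce to ${\rm crossdiam}(\C_j,\C_{j+1})\le\delta$, which holds by the assumption on $(\C_i)_{i\in\bbZ_s}$.

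It then remains to compare sizes:
\begin{equation*}
\min(|\C'_0|,|\C'_1|) \;=\; \min(|\C_j|,|\C_{j+1}|) \;\geq\; |\C_j| \;=\; F_s(n,\delta),
\end{equation*}
where the inequality uses that $|\C_j|$ is already the minimum over $\bbZ_s$. Since $(\C'_0,\C'_1)$ is a particular $(n,\delta)_2$-LP code, by definition $F_2(n,\delta)\geq \min(|\C'_0|,|\C'_1|)\geq F_s(n,\delta)$, which is the claim.

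There is essentially no obstacle; the only subtlety worth stating explicitly is the symmetry of crossdiam, which is what makes a \emph{single} adjacency condition in $\bbZ_s$ suffice to satisfy \emph{both} of the cyclic adjacency conditions in $\bbZ_2$. No assumption on $s$ beyond $s\ge 2$ is used, so the argument is uniform.
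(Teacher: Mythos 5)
Your proof is correct and uses the same idea as the paper: restrict an optimal $(n,\delta)_s$-LP code to a pair of consecutive subcodes and observe that the resulting $(n,\delta)_2$-LP code has minimum size at least $\min_{i\in\bbZ_s}|\C_i|$. In fact your version is the careful instantiation of the paper's one-line argument --- the paper asserts the restriction property for an arbitrary subset $A\subseteq\bbZ_s$ with $|A|\geq 2$, which as stated would fail for non-consecutive indices (e.g.\ $\{0,2\}\subseteq\bbZ_4$, where ${\rm crossdiam}(\C_0,\C_2)$ is not controlled), whereas your choice of the consecutive pair $(\C_j,\C_{j+1})$ is exactly the case that is both true and sufficient.
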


\begin{proof}
This follows from the property that if $(C_i)_{i\in\bbZ_s}$ is an $(n,\delta)_s$-LP code, then for
any $A\subseteq\bbZ_s$ such that $|A|\geq 2$, $(\C_i)_{i\in A}$ is also an $(n,\delta)_{|A|}$-LP
code.
\end{proof}

Lemma \ref{statereduction} shows that there is no advantage in using $s$-state
encoding schemes with $s>2$. Hence, we restrict our attention to $s=1$ (the stateless
case) and $s=2$ (the stateful case) for the rest of this paper. 

\begin{problem}
Given $n$ and $\delta$, construct an optimal
$(n,\delta)_1$-LP code.
\end{problem}

\begin{problem}
Given $n$ and $\delta$, construct an optimal
$(n,\delta)_2$-LP code.
\end{problem}

For $\vu,\vv\in\HH(n,w)$, we have $d_H(\vu,\vv)\leq \min\{2w,n\}$. Therefore one
way to constrain the number of bit switchings is to limit the code to contain codewords of
low weight. This suggests the consideration of constant weight codes.

\begin{problem}
Given $n$ and $\delta$, construct an optimal
$(n,\delta)_1$-LP code of constant weight $w$.
\end{problem}

\begin{problem}
Given $n$ and $\delta$, construct an optimal
$(n,\delta)_2$-LP code of constant weight $w$.
\end{problem}

In subsequent sections, we present complete solutions to Problems 2, 3, and 4.
Problem 5 is solved for $n$ sufficiently large.

\section{Optimal Stateless Low Power Codes}
\label{stateless}

By definition, $([n],\A)$ is the set system of an
$(n,\delta)_1$-LP code if and only if $|A_1\Delta A_2|\leq \delta$ for all $A_1,A_2\in\A$.

The $(n,\delta)_1$-LP codes are known as {\em anticodes of
length $n$ and maximum distance $\delta$} and were introduced in the coding theory literature
by Farrell \cite{Farrell:1970} in 1970. However, set systems $([n],\A)$ satisfying
$|A_1\Delta A_2|\leq \delta$ for all $A_1,A_2\in\A$ were studied earlier by
Kleitman \cite{Kleitman:1966}, who obtained a complete solution to the problem of determining
the maximum number of blocks in such a set system.

\begin{theorem}[Kleitman]
\label{Kleitman}
Let $N(n,\delta)$ denote the maximum number of blocks in a set system $([n],\A)$
satisfying $|A_1\Delta A_2|\leq \delta$ for all $A_1,A_2\in\A$. Then
\begin{equation*}
 N(n,\delta) = 
 \begin{cases}
\sum_{i=0}^{\delta/2} {n\choose i},&\text{if $\delta\equiv 0\pmod{2}$;} \\
\sum_{i=0}^{(\delta-1)/2} {n\choose i}+{n-1\choose (\delta-1)/2},&\text{if $\delta\equiv 1\pmod{2}$.}
\end{cases}
\end{equation*}
The extremal set systems are given by
\begin{equation*}
\A = 
\begin{cases}
\bigcup_{i=0}^{\delta/2}{[n]\choose i}, &\text{if $\delta\equiv 0\pmod{2}$;} \\
\bigcup_{i=0}^{(\delta-1)/2}{[n]\choose i}\cup \\ \{A\cup\{x\}:A\in{[n]\setminus\{x\}\choose(\delta-1)/2}\}, &
\text{if $\delta\equiv 1\pmod{2}$,}
\end{cases}
\end{equation*}
where $x$ is any fixed element of $[n]$.
\end{theorem}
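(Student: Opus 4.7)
The plan is to prove Kleitman's theorem by the standard compression/shifting method combined with induction on $n$. First, I would introduce the shift operator $S_{ij}$ for each $i<j$ in $[n]$: for each block $A \in \A$ with $j \in A$ and $i \notin A$, replace $A$ by $(A \setminus \{j\}) \cup \{i\}$ whenever the latter is not already in $\A$. A routine case check establishes $|S_{ij}(A) \Delta S_{ij}(B)| \le |A \Delta B|$ for all $A,B$, so shifting preserves both $|\A|$ and the constraint $|A_1 \Delta A_2| \le \delta$. Iterating until no shift changes $\A$, we may assume $\A$ is \emph{left-compressed}: whenever $A \in \A$, $j \in A$, and $i < j$ with $i \notin A$, the set $(A \setminus \{j\}) \cup \{i\}$ also lies in $\A$.

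Second, I would induct on $n$. Partition $\A = \A_0 \cup \A_1$ according to whether $n \in A$, and set $\A_1' = \{A \setminus \{n\} : A \in \A_1\}$ as a set system on $[n-1]$. Both $\A_0$ and $\A_1'$ inherit a diameter bound of $\delta$; moreover, for any $A \in \A_0$ and $B \in \A_1$, $|A \Delta (B \setminus \{n\})| = |A \Delta B| - 1 \le \delta - 1$, so the cross diameter of $\A_0$ and $\A_1'$ is at most $\delta - 1$. Noting that Pascal's identity $\binom{n}{i} = \binom{n-1}{i} + \binom{n-1}{i-1}$ gives the recurrence $N(n,\delta) = N(n-1,\delta) + N(n-1,\delta - 2)$ in the even case (and an analogous recurrence in the odd case, tracking the extra term $\binom{n-1}{(\delta-1)/2}$), the goal becomes $|\A_0| \le N(n-1,\delta)$ (direct from the inductive hypothesis) and $|\A_1'| \le N(n-1,\delta - 2)$.

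The main obstacle is this sharpened bound on $\mathrm{diam}(\A_1')$: the obvious argument from $A,B \in \A_1$ gives only $\mathrm{diam}(\A_1') \le \delta$, which is insufficient. The resolution exploits that in a left-compressed family the element $n$ can always be ``pushed down'' to lower indices, forcing any two blocks in $\A_1$ to admit a common companion in $\A_0$ whose cross-diameter bound witnesses the tighter inequality. Once the inductive step closes, the extremal characterization follows by tracing equality: in each inductive step, equality identifies $\A_0$ and $\A_1'$ as the unique extremal families of the smaller instance, and undoing the shifts recovers the Hamming ball of radius $\delta/2$ for even $\delta$, or its modification with the distinguished point $x$ arising from the odd-case splitting.
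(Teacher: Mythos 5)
First, a point of reference: the paper does not prove this theorem. It is quoted from Kleitman \cite{Kleitman:1966} and used as a black box, so your proposal must be judged on its own merits rather than against an argument in the paper.

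Your sketch breaks down at precisely the step you flag as ``the main obstacle.'' The claim that left-compression forces $\mathrm{diam}(\A_1')\leq\delta-2$ is false. Take $n=4$, $\delta=2$, and $\A=\{[4]\}\cup\{[4]\setminus\{i\}:i\in[4]\}$. This family is left-compressed, has diameter $2$, and is extremal ($|\A|=5=N(4,2)$); yet $\A_1'=\{\{1,2,3\},\{2,3\},\{1,3\},\{1,2\}\}$ has diameter $2$, not $0$, and $|\A_1'|=4$ far exceeds $N(3,0)=1$. (A minimal instance: $n=2$, $\delta=2$, $\A=\{\{1\},\{2\},\{1,2\}\}$.) Your ``push $n$ down to a common companion'' argument requires an index $i\in[n-1]\setminus(A\cup B)$, and it has no room to operate when $A\cup B=[n]$ --- which is exactly what happens in these examples. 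Consequently the two separate bounds $|\A_0|\leq N(n-1,\delta)$ and $|\A_1'|\leq N(n-1,\delta-2)$ need not both hold, and the induction does not close. What is true, and what would be needed, is a \emph{joint} bound on $|\A_0|+|\A_1'|$ for a pair of families each of diameter at most $\delta$ with cross diameter at most $\delta-1$; that is a genuinely harder two-family statement (of the Ahlswede--Katona flavour the paper invokes later for the stateful case) and cannot simply be assumed inside the induction. Kleitman's own argument sidesteps this by inducting on $\delta$ rather than on $n$, peeling off the maximal sets of the family; the known shifting proof (Frankl) also requires a structural lemma about compressed families that your sketch does not supply.

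A secondary, repairable error: the pointwise inequality $|S_{ij}(A)\Delta S_{ij}(B)|\leq|A\Delta B|$ is not true. If $A$ is moved but $B$ is not moved because $(B\setminus\{j\})\cup\{i\}$ already lies in $\A$, then the distance between the images of $A$ and $B$ \emph{increases} by $2$. The shifted family does still have diameter at most $\delta$, but only because that new distance equals $|A\Delta((B\setminus\{j\})\cup\{i\})|$, a distance already realized between two members of the original family; the correct argument matches each new pair with some old pair at the same distance rather than comparing images pointwise. As written, the justification for the compression step is wrong even though its conclusion is salvageable.
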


The explicit description of optimal $(n,\delta)_1$-LP
codes provided by Theorem \ref{Kleitman}
shows that such codes can be constructed in time polynomial in the size of the code.
This solves Problem 2.

\subsection{Restriction to Constant Weight}

We now address constant weight $(n,\delta)_1$-LP codes.
The diameter of any constant weight $(n,\delta)_1$-LP code is even, so we need
only consider $\delta\equiv 0\pmod{2}$. In this subsection, we show the equivalence of constant
weight $(n,\delta)_1$-LP codes and {\em intersecting families}.

\begin{definition}
A set system $(X,\A)$ is a {\em $t$-intersecting family} if $|A_1\cap A_2|\geq t$ holds for all
$A_1,A_2\in\A$.
\end{definition}

\begin{lemma}
\label{cwcmemoryless}
The code $\C$ of a set system $([n],\A)$ is an $(n,\delta)_1$-LP code of constant weight $w$
if and only if $([n],\A)$ is a $w$-uniform $\left(w-\delta/2\right)$-intersecting family.
\end{lemma}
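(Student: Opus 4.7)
The plan is to reduce the biconditional to the elementary set identity that, for two sets of the same cardinality, the symmetric difference and the intersection determine each other. The translation between codes and set systems established in Section 2 already tells us that $w_H(\vu) = |\mathrm{supp}(\vu)|$ and $d_H(\vu,\vv) = |\mathrm{supp}(\vu)\,\Delta\,\mathrm{supp}(\vv)|$, so the content of the lemma is purely combinatorial once these identifications are made.

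First I would unpack the ``constant weight $w$'' condition: the code $\C$ has constant weight $w$ exactly when every block $A \in \A$ satisfies $|A|=w$, which is precisely the definition of the set system $([n],\A)$ being $w$-uniform. This takes care of the uniformity clause on both sides of the equivalence and lets us assume $|A_1|=|A_2|=w$ for the rest of the argument.

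Next, for any two blocks $A_1,A_2\in\A$ of size $w$, I would apply the identity
\begin{equation*}
|A_1\,\Delta\,A_2| \;=\; |A_1| + |A_2| - 2|A_1\cap A_2| \;=\; 2w - 2|A_1\cap A_2|.
\end{equation*}
Translating back through the correspondence $\vu \leftrightarrow \mathrm{supp}(\vu)$, this says $d_H(\vu,\vv) = 2w - 2|A_1\cap A_2|$ for the corresponding codewords. Consequently,
\begin{equation*}
d_H(\vu,\vv)\leq\delta \ \Longleftrightarrow\ |A_1\cap A_2| \geq w - \delta/2.
\end{equation*}
Quantifying over all pairs of codewords (equivalently, all pairs of blocks) yields the lemma: $\mathrm{diam}(\C)\leq\delta$ on a constant-weight-$w$ code is the same as the $w$-uniform family being $(w-\delta/2)$-intersecting.

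There is no real obstacle here; the only subtlety worth flagging is that the bound $w-\delta/2$ must be a non-negative integer for the intersecting-family language to make sense, which is guaranteed by the observation (already made in the paragraph introducing constant-weight LP codes) that the diameter of a constant-weight code is even, so we may take $\delta\equiv 0\pmod 2$ without loss. Since the case $w - \delta/2 \leq 0$ imposes a vacuous intersection requirement and also corresponds to $\delta\geq 2w\geq \mathrm{diam}(\C)$ being vacuous on the code side, the equivalence holds uniformly.
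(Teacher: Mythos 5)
Your proof is correct and follows essentially the same route as the paper's: both reduce the claim to the equivalence $|A_1\cap A_2|\geq w-\delta/2 \Leftrightarrow |A_1\Delta A_2|\leq\delta$ for $w$-sets together with the identification $d_H(\iota(A_1),\iota(A_2))=|A_1\Delta A_2|$. You simply make explicit the identity $|A_1\Delta A_2|=2w-2|A_1\cap A_2|$ and the parity remark that the paper states just before the lemma.
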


\begin{proof}
Since $([n],\A)$ is $\left(w-\delta/2\right)$-intersecting,
\begin{eqnarray*}
|A_1\cap A_2| \geq w-\delta/2 & \Leftrightarrow & |A_1\Delta A_2| \leq \delta \\
& \Leftrightarrow & d_H(\iota(A_1),\iota(A_2)) \leq \delta,
\end{eqnarray*}
for all $A_1,A_2\in\A$.
Hence, $\C$ has diameter at most $\delta$.
\end{proof}

Let $M(n,k,t)$ be the maximum number of blocks in a $k$-uniform $t$-intersecting
family of order $n$. The determination of $M(n,k,t)$ and the structure of maximum $k$-uniform
$t$-intersecting families was initiated by Erd\H{o}s {\em et al.} \cite{Erdosetal:1961},
who proved the celebrated Erd\H{o}s-Ko-Rado Theorem. This result has been
improved subsequently by Frankl \cite{Frankl:1978}, Wilson \cite{Wilson:1984}, and
Ahlswede and Khachatrian \cite{AhlswedeKhachatrian:1997}, culminating in the
following.

\begin{theorem}[Complete Intersection Theorem]\hfill
\label{CIT}
Let $1\leq t\leq k\leq n$ and
define $\F(i)=\{F\in{[n]\choose k} : |F\cap [t+2i]|\geq t+i\}$. Let
$r = \arg \max_{i\in[0,(n-t)/2]} |\F(i)|$. Then $M(n,k,t)=|\F(r)|$ and
$\F(r)$ is a maximum $k$-uniform $t$-intersecting family of order $n$.
\end{theorem}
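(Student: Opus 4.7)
The plan is to split the statement into the easy lower bound and the substantive upper bound. For the lower bound, I would verify directly that each $\mathcal{F}(i)$ is a $k$-uniform $t$-intersecting family: if $F_1,F_2 \in \mathcal{F}(i)$, then $|F_1 \cap [t+2i]|$ and $|F_2 \cap [t+2i]|$ are both at least $t+i$, so by inclusion-exclusion inside $[t+2i]$,
\begin{equation*}
|F_1 \cap F_2| \;\geq\; |F_1 \cap F_2 \cap [t+2i]| \;\geq\; 2(t+i) - (t+2i) \;=\; t .
\end{equation*}
This already gives $M(n,k,t) \geq |\mathcal{F}(r)|$, and the families are explicit, so only the upper bound requires real work.

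For the upper bound, the approach I would take is the standard shifting (compression) reduction followed by a structural characterization of extremal shifted families. First I would introduce, for $1 \leq a < b \leq n$, the shift operator $S_{ab}$ that replaces $b$ by $a$ in any block where doing so is legal, and verify the two usual facts: $|S_{ab}(\A)| = |\A|$, and $S_{ab}$ preserves the $t$-intersecting property. Iterating the shifts terminates in a \emph{left-compressed} family of the same cardinality, so it suffices to bound $|\A|$ when $\A$ is left-compressed and $t$-intersecting.

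The heart of the proof is to analyze left-compressed $t$-intersecting families. I would follow the Ahlswede--Khachatrian ``pushing--pulling'' strategy: associate to $\A$ its generating set (minimal-by-inclusion members), and track a parameter like $\min_{A_1,A_2 \in \A} |A_1 \cap A_2 \cap [t+2i]|$ as $i$ varies. The key step is to show that if $\A$ is not already one of the $\mathcal{F}(i)$, then one can perform a local move — removing some blocks and adding others, both controlled by the compression — that yields a new $t$-intersecting family of size at least $|\A|$ and strictly closer (in a well-chosen monovariant) to a family of the form $\mathcal{F}(i)$. Iterating this move collapses $\A$ into some $\mathcal{F}(i)$, forcing $|\A| \leq \max_i |\mathcal{F}(i)| = |\mathcal{F}(r)|$.

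The main obstacle will be the pushing--pulling step: verifying that the local modification really does preserve the $t$-intersecting property (because a newly added block must intersect every retained block in at least $t$ points) and that a suitable monovariant strictly decreases. This is exactly where the earlier partial results of Frankl and Wilson fall short and where Ahlswede--Khachatrian's sharper analysis is needed; any realistic write-up would therefore quote the pushing--pulling lemma and apply it, rather than re-deriving it from scratch. The optimization over $i \in [0,(n-t)/2]$ at the end is then routine, since $|\mathcal{F}(i)|$ is an explicit sum of binomial coefficients whose maximizing index $r$ depends only on $n$, $k$, and $t$.
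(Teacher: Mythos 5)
The paper does not prove Theorem~\ref{CIT} at all: it is imported verbatim from the literature, credited to Erd\H{o}s--Ko--Rado, Frankl, Wilson, and finally Ahlswede--Khachatrian, and used as a black box to settle Problem~4 via Lemma~\ref{cwcmemoryless}. So there is no in-paper proof to compare against. Judged on its own terms, your proposal is an accurate account of the known proof: the lower-bound computation (each $\F(i)$ is $t$-intersecting because $|F_1\cap F_2\cap[t+2i]|\geq 2(t+i)-(t+2i)=t$) is complete and correct, and the upper-bound outline --- shifting to a left-compressed family, then the Ahlswede--Khachatrian generating-set/pushing--pulling argument --- is the right strategy and correctly identifies where Frankl's and Wilson's partial results stop short. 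The only caveat is that the entire difficulty of the theorem lives inside the pushing--pulling lemma that you propose to quote rather than prove; as a self-contained proof this is therefore a sketch, but as a treatment of the statement it is strictly more informative than the paper's, which cites the result without any argument.
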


Lemma \ref{cwcmemoryless} together with Theorem \ref{CIT} completely determines
the optimal $(n,\delta)_1$-LP codes of constant weight $w$, solving Problem 4.

\section{Optimal Stateful Low Power Codes}

Two set systems $([n],\A)$ and $([n],\B)$ form a {\em $\delta$-pair} if $|A\Delta B|\leq\delta$
for all $A\in\A$ and $B\in\B$. Let $([n],\A)$ and $([n],\B)$
be the set systems of $\C_0$ and $\C_1$, respectively. Then from the definition, $([n],\A)$ and $([n],\B)$ form a $\delta$-pair if and only if $(\C_0,\C_1)$ is an
$(n,\delta)_2$-LP code.

Let $0\leq r\leq n$.
A {\em quasi-sphere of dimension $n$ and radius $r$} is a set $\Q$ of the form
$\Q=\{A\subseteq [n] : 0\leq |A|\leq r\}\cup \bar{\Q}$, where $\bar{\Q}$ consists of the first $L$
elements of $\{A\subseteq [n]: |A|=r+1\}$ in lexicographic order, for some $L\geq 0$.
Given $0\leq N\leq 2^n$, there exists a unique quasi-sphere $\Q$ of dimension $n$ such that
$|\Q|=N$ (see \cite{Katona:1975} for a proof).

We want to determine the maximum of $\min\{|\A|,|\B|\}$, where $([n],\A)$
and $([n],\B)$ form a $\delta$-pair. A related result has been obtained by Ahlswede and
Katona \cite{AhlswedeKatona:1977}.

\begin{theorem}[Ahlswede and Katona]
\label{AhlswedeKatona}
Let $1\leq N\leq 2^n$, $1\leq\delta\leq n$, and let $([n],\A)$ and $([n],\B)$
be a $\delta$-pair with $|\A|=N$. Then $\max |\B|$ is achieved if $\A$ is a
quasi-sphere and $\B=\{B : \text{$|B\Delta A|\leq \delta$ for all $A\in\A$}\}$.
\end{theorem}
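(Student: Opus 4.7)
The first observation is a direct reduction: for any fixed family $\A$, the choice of $\B$ maximizing $|\B|$ subject to $(\A,\B)$ being a $\delta$-pair is forced to be the \emph{full $\delta$-neighborhood}
\begin{equation*}
\B^\star(\A) := \{B\subseteq [n] : |B\Delta A|\leq\delta \text{ for all } A\in\A\},
\end{equation*}
which is itself admissible. So it suffices to show that among all $\A\subseteq 2^{[n]}$ with $|\A|=N$, the quantity $|\B^\star(\A)|$ is maximized when $\A$ is the quasi-sphere of dimension $n$ and size $N$. Writing $B_\delta(A)$ for the Hamming ball of radius $\delta$ about $\iota(A)$, this is equivalent to maximizing $\left|\bigcap_{A\in\A} B_\delta(A)\right|$ over $|\A|=N$.

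My plan is to attack this via the standard shifting operator of extremal set theory. For $1\leq i<j\leq n$, let $S_{ij}(\A)$ be obtained from $\A$ by replacing each $A\in\A$ with $j\in A$, $i\notin A$ by $(A\setminus\{j\})\cup\{i\}$ whenever the latter is not already a block, leaving other blocks unchanged; this preserves $|\A|$. The technical heart of the argument is the monotonicity inequality
\begin{equation*}
|\B^\star(S_{ij}(\A))| \;\geq\; |\B^\star(\A)|.
\end{equation*}
I would establish this by applying the analogous map $s_{ij}(B)=(B\setminus\{j\})\cup\{i\}$ to each $B\in\B^\star(\A)$: keep $B$ itself if $B\in\B^\star(S_{ij}(\A))$, otherwise send it to $s_{ij}(B)$. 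A case analysis on the membership of $i$ and $j$ in each $A\in\A$ and in $B$ — including the subcase in which the shift of some $A$ is blocked because $(A\setminus\{j\})\cup\{i\}\in\A$, where one must pair $A$ with its shifted counterpart — shows that the map is well defined and injective into $\B^\star(S_{ij}(\A))$.

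Iterating $S_{ij}$ for all $i<j$ until stability produces a fully left-compressed family $\A^*$ with $|\A^*|=N$ and $|\B^\star(\A^*)|\geq|\B^\star(\A)|$. It then remains to identify the left-compressed extremizer, and here a layer-by-layer exchange argument in the spirit of Kruskal--Katona applies: if some low-weight layer of $\A^*$ is incomplete, one may replace a block of higher weight by a missing block of lower weight, and under left-compression the corresponding ball $B_\delta(\cdot)$ still contains every element needed by the intersection, so $|\B^\star(\cdot)|$ does not decrease. Iterating forces $\A^*$ to consist of every subset of $[n]$ of size at most some $r$ together with a lex-initial segment of size-$(r+1)$ blocks — exactly the definition of a quasi-sphere. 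I expect the monotonicity inequality to be the main obstacle; the bookkeeping in the case analysis, particularly handling the ``blocked shift'' case symmetrically on both sides, is where all the genuine work lies.
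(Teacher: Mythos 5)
First, a remark on scope: the paper does not prove this statement; it quotes it from Ahlswede and Katona \cite{AhlswedeKatona:1977}, so there is no in-paper proof to compare against and your attempt must be judged on its own. Your first two steps are sound. The reduction to maximizing $|\B^\star(\A)|$ over all $\A$ with $|\A|=N$ is immediate, and the shift-monotonicity does hold: applying $S_{ij}$ simultaneously to $\A$ and to $\B^\star(\A)$ preserves the $\delta$-pair property. The dangerous ``blocked'' subcase --- say $A$ is shifted while $B$ is not because $(B\setminus\{j\})\cup\{i\}$ already lies in $\B^\star(\A)$ --- is self-correcting, since the presence of $(B\setminus\{j\})\cup\{i\}$ in $\B^\star(\A)$ forces $|A\Delta B|\leq\delta-2$, leaving exactly the slack of $2$ that shifting $A$ consumes. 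Hence $S_{ij}(\B^\star(\A))\subseteq\B^\star(S_{ij}(\A))$ and the cardinality does not drop. (Your variant of the map, which shifts $B$ precisely when $B\notin\B^\star(S_{ij}(\A))$, is not literally $S_{ij}$, and its injectivity would need a separate check; the standard operator is cleaner.)

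The genuine gap is the endgame. The operators $S_{ij}$ preserve the weight of every block, so no amount of left-compression can turn an arbitrary family into a quasi-sphere, which contains blocks of every weight up to $r+1$. At minimum you also need the down-compressions $D_j(A)=A\setminus\{j\}$ (these too preserve the $\delta$-pair property by the same slack argument), which you never introduce. But even a family stable under all $S_{ij}$ and all $D_j$ need not be a quasi-sphere: for $n=4$, $\A=\{\emptyset,\{1\},\{2\},\{3\},\{1,2\}\}$ is a left-compressed down-set of size $5$, yet for $\delta=2$ it gives $|\B^\star(\A)|=3$ while the quasi-sphere $\{\emptyset,\{1\},\{2\},\{3\},\{4\}\}$ gives $5$. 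So the ``layer-by-layer exchange'' you invoke --- replace a heavier block by a missing lighter one and claim $|\B^\star|$ cannot decrease --- is exactly where the theorem lives, and your one-sentence justification (``the ball still contains every element needed by the intersection'') is an assertion, not an argument; it is not even clear a priori that a single such exchange is monotone. A smaller but real further issue: stability under all $S_{ij}$ within one layer does not force a lexicographic initial segment (e.g.\ $\{\{1,2\},\{1,3\},\{2,3\}\}\subseteq{[4]\choose 2}$ is shift-stable but not lex-initial), so compression does not even deliver the structure of the top layer of the claimed extremizer. This endgame is the substantive content of the Ahlswede--Katona proof and is missing here.
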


We assume without loss of generality that $\delta < n$, since otherwise we may take
$\A=\B=2^{[n]}$.
Suppose that $([n],\A)$ and $([n],\B)$ form a $\delta$-pair, with $|\A|=N$, such that
$\B$ has the maximum number of blocks. Write
$N=\sum_{i=0}^{r}{n\choose i}+m$, $0\leq m\leq {n\choose r+1}$, for some $r\leq n$.
Theorem \ref{AhlswedeKatona} asserts that $\A$ is the quasi-sphere of 
dimension $n$ and radius $r$, with $|\A|=N$,
and $\B=\{B\subseteq[n]: \text{$|B\Delta A|\leq \delta$ for all $A\in\A$}\}$.
Suppose that $B\in\B$ and $|B|=k$. Then there exists $A\in([n]\setminus B)$ with $|A|=\delta-k+1$
whenever $\delta<n$. Since $|A\Delta B|=\delta+1$, we have $A\not\in\A$. This is possible
only if $\delta-k+1> r$, or $k< \delta-r+1$.
It follows that no $k$-subset of $[n]$, $k\geq \delta-r+1$, is in $\B$. Hence,
$|\B|\leq \sum_{i=0}^{\delta-r}{n\choose i}$.

Now consider the case that $\delta\equiv 0$ (mod 2).
If $r\geq\delta/2$, then $|\B|\leq\sum_{i=0}^{\delta/2}{n\choose i}\leq N(n,\delta)$. If $r<\delta/2$, then
$|\A|\leq\sum_{i=0}^{\delta/2-1}{n\choose i}+{n\choose \delta/2}=\sum_{i=0}^{\delta/2}{n\choose i}
\leq N(n,\delta)$. Hence, we have $\min\{|\A|,|\B|\} \leq N(n,\delta)$.

Next, consider the case that $\delta\equiv 1$ (mod 2). If $r\geq (\delta+1)/2$, then
$|\B|\leq\sum_{i=0}^{(\delta-1)/2}{n\choose i}\leq N(n,\delta)$. If $r\leq (\delta-3)/2$, then
$|\A|\leq\sum_{i=0}^{(\delta-3)/2}{n\choose i}+{n\choose (\delta-1)/2}\leq
\sum_{i=0}^{(\delta-3)/2}{n\choose i}+{n-1\choose(\delta-1)/2}\leq N(n,\delta)$. Hence,
$\min\{|\A|,|\B|\} \leq N(n,\delta)$, when $r\not=(\delta-1)/2$.

We deal with the remaining case in which $\delta\equiv 1$ (mod 2) and $r=(\delta-1)/2$.
If $m\leq {n-1\choose r}$, then $|\A|\leq\sum_{i=0}^{(\delta-1)/2}{n\choose i}+{n-1\choose{(\delta-1)/2}}
\leq N(n,\delta)$. If $m>{n-1\choose r}$, then by definition of a quasi-sphere,
$\A$ contains the sets $\{1\}\cup R$, where $R\in{[2,n]\choose r}$. We claim that $1\in B$ for every
$B\in\B$ such that $|B|=r+1$. Suppose not, let $R\subseteq [n]\setminus B$ such that
$|R|=r$. Then $A=\{1\}\cup R\in\A$, and $|A\Delta B|=2r+2>\delta$, giving a contradiction.
It follows that all subsets of size $r+1$ in $\B$ must contain the element $1$, and hence
the number of subsets of size $r+1$ in $\B$ is at most ${n-1\choose r}$. Hence,
$|\B|\leq\sum_{i=0}^{(\delta-1)/2}{n\choose i}+{n-1\choose (\delta-1)/2}\leq N(n,\delta)$.

This establishes the following:

\begin{theorem}
For any $\delta$-pair $([n],\A)$ and $([n],\B)$, 
\begin{align*}
& \min\{|\A|,|\B|\} \leq \\
& \begin{cases}
\sum_{i=0}^{\delta/2} {n\choose i},&\text{if $\delta\equiv 0\pmod{2}$;} \\
\sum_{i=0}^{(\delta-1)/2} {n\choose i}+{n-1\choose (\delta-1)/2},&\text{if $\delta\equiv 1\pmod{2}$,}
\end{cases}
\end{align*}
with equality if
\begin{equation*}
 \A = \B =
 \begin{cases}
\bigcup_{i=0}^{\delta/2}{[n]\choose i}, &\text{if $\delta\equiv 0\pmod{2}$} \\
\bigcup_{i=0}^{(\delta-1)/2}{[n]\choose i}\cup \\ \{A\cup\{x\}:A\in{[n]\setminus\{x\}\choose(\delta-1)/2}\}, &
\text{if $\delta\equiv 1\pmod{2}$,}
\end{cases}
\end{equation*}
where $x$ is any fixed element of $[n]$.
\end{theorem}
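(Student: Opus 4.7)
The plan is to invoke Theorem \ref{AhlswedeKatona} to reduce to the case in which one member of the $\delta$-pair is a quasi-sphere and the other is maximal with respect to it. Concretely, fix $|\A|=N$ and write $N=\sum_{i=0}^{r}{n\choose i}+m$ with $0\leq m\leq{n\choose r+1}$; Ahlswede--Katona lets us assume $\A$ is the quasi-sphere of dimension $n$ and radius $r$ of size $N$ and that $\B=\{B\subseteq [n]:|B\Delta A|\leq \delta \text{ for all } A\in\A\}$. Under this reduction, $\min\{|\A|,|\B|\}$ for a $\delta$-pair of given $|\A|$ is maximized, so it suffices to bound this quantity for quasi-sphere $\A$.

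The next step is a key counting observation on $\B$: since $\A$ contains every subset of $[n]$ of size at most $r$, any $B\in\B$ with $|B|=k\geq \delta-r+1$ admits some $A\subseteq [n]\setminus B$ with $|A|=\delta-k+1\leq r$, giving $A\in\A$ but $|A\Delta B|=\delta+1$, a contradiction. Hence $|\B|\leq \sum_{i=0}^{\delta-r}{n\choose i}$. I would then do a case analysis on the parity of $\delta$ together with the value of $r$ relative to $\delta/2$. When $\delta$ is even, the two cases $r\geq \delta/2$ (bound $|\B|$) and $r<\delta/2$ (bound $|\A|$) straightforwardly yield $\min\{|\A|,|\B|\}\leq N(n,\delta)$. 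When $\delta$ is odd and $r\neq (\delta-1)/2$, an analogous split over $r\geq(\delta+1)/2$ versus $r\leq(\delta-3)/2$ works, using $\binom{n}{(\delta-1)/2}\leq \binom{n-1}{(\delta-1)/2}+\binom{n-1}{(\delta-3)/2}$-type manipulations to match the Kleitman bound.

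The main obstacle is the boundary case $\delta$ odd with $r=(\delta-1)/2$, where neither crude bound is tight. Here I would split on $m$: if $m\leq \binom{n-1}{r}$, the bound on $|\A|$ is immediate from $N(n,\delta)$'s formula; if $m>\binom{n-1}{r}$, then by the lexicographic definition of a quasi-sphere, $\A$ contains every set of the form $\{1\}\cup R$ with $R\in\binom{[2,n]}{r}$. I would then argue that every $B\in\B$ of size $r+1$ must contain $1$: if not, pick $R\subseteq [n]\setminus B$ with $|R|=r$, so $A=\{1\}\cup R\in\A$ and $|A\Delta B|=2r+2=\delta+1$, contradicting $B\in\B$. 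Hence the number of $(r+1)$-sets in $\B$ is at most $\binom{n-1}{r}$, yielding $|\B|\leq N(n,\delta)$.

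Finally, for the equality clause, I would verify directly that the Kleitman extremal family $\A=\B$ from Theorem \ref{Kleitman} is a $\delta$-pair: since $|A_1\Delta A_2|\leq \delta$ for all $A_1,A_2\in\A$ (that family is an anticode of diameter $\delta$), taking $\B=\A$ trivially satisfies $|A\Delta B|\leq \delta$ for all $A\in\A$, $B\in\B$. Together with $|\A|=|\B|=N(n,\delta)$, this shows the bound is attained.
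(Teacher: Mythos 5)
Your proposal is correct and follows essentially the same route as the paper: the Ahlswede--Katona reduction to a quasi-sphere $\A$ with maximal $\B$, the bound $|\B|\leq\sum_{i=0}^{\delta-r}{n\choose i}$ via a disjoint set $A$ of size $\delta-k+1$, the case analysis on the parity of $\delta$ and the value of $r$, and the lexicographic argument forcing $1\in B$ in the boundary case $\delta$ odd, $r=(\delta-1)/2$, $m>{n-1\choose r}$. No substantive differences.
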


\begin{corollary}
\label{nohelp}
$F_1(n,\delta)=F_2(n,\delta)$; 
the size of an optimal $(n,\delta)_2$-LP code is the same as the size of an optimal $(n,\delta)_1$-LP
code.
\end{corollary}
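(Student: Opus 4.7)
The plan is to prove the corollary by showing both inequalities $F_1(n,\delta) \leq F_2(n,\delta)$ and $F_2(n,\delta) \leq F_1(n,\delta)$.

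For the first inequality, I would observe that every stateless LP code yields a stateful LP code of the same size by repetition. Concretely, given an optimal $(n,\delta)_1$-LP code $\C$ with $|\C| = F_1(n,\delta)$, set $\C_0 = \C_1 = \C$. Because crossdiam$(\C,\C) = {\rm diam}(\C) \leq \delta$, the pair $(\C_i)_{i \in \bbZ_2}$ is an $(n,\delta)_2$-LP code with $\min_{i \in \bbZ_2} |\C_i| = |\C|$. Hence $F_2(n,\delta) \geq F_1(n,\delta)$.

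For the reverse inequality, I would appeal directly to the theorem just established. Any $(n,\delta)_2$-LP code $(\C_0,\C_1)$ corresponds to a $\delta$-pair $(([n],\A),([n],\B))$ via the incidence-vector bijection, and the preceding theorem gives
\begin{equation*}
\min\{|\A|,|\B|\} \leq N(n,\delta),
\end{equation*}
where the right-hand side is recognized (by comparing the case analysis of the theorem with the statement of Theorem \ref{Kleitman}) as Kleitman's bound. Since Theorem \ref{Kleitman} together with the discussion in Section \ref{stateless} shows $F_1(n,\delta) = N(n,\delta)$, we conclude $F_2(n,\delta) \leq F_1(n,\delta)$.

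There is essentially no obstacle here, as the real work has already been done in the preceding theorem; the corollary is just the observation that the two upper bounds coincide. The only step requiring minor care is noting that the repetition $\C_0 = \C_1 = \C$ genuinely produces a valid 2-state LP code, which follows from the identity crossdiam$(\C,\C) = {\rm diam}(\C)$ recorded in the preliminaries.
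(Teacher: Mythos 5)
Your proposal is correct and matches the paper's (implicit) argument: the corollary is stated without a separate proof precisely because the preceding theorem's bound coincides with Kleitman's value $N(n,\delta)=F_1(n,\delta)$, and the equality case $\A=\B$ (equivalently, your repetition $\C_0=\C_1=\C$) shows the bound is attained. Both directions you spell out are exactly what the paper intends.
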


This solves Problem 3.

\subsection{Restriction to Constant Weight}

Two set systems $([n],\A)$ and $([n],\B)$ are called {\em cross-wise $t$-intersecting}
if $|A\cap B|\geq t$ for all $A\in\A$ and $B\in\B$.

\begin{lemma}
Let $([n],\A)$ and $([n],\B)$ be the set systems of codes $\C_0$ and $\C_1$, respectively.
Then $(\C_0,\C_1)$ is an $(n,\delta)_2$-LP code of constant weight $w$ if and only if $([n],\A)$
and $([n],\B)$ are $w$-uniform and cross-wise $\left(w-\delta/2\right)$-intersecting.
\end{lemma}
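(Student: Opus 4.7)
The plan is to mimic the proof of Lemma \ref{cwcmemoryless}, but taking care that the symmetric-difference/intersection identity requires \emph{both} sets involved to have cardinality $w$. First I would observe, using the bijection between constant weight codes of length $n$ and weight $w$ and $w$-uniform set systems of order $n$ established in the preliminaries, that $\C_0$ and $\C_1$ both have constant weight $w$ precisely when $([n],\A)$ and $([n],\B)$ are both $w$-uniform.

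Next, for any $A\in\A$ and $B\in\B$ of size $w$, I would use the elementary identity
\begin{equation*}
|A\Delta B| = |A|+|B|-2|A\cap B| = 2w - 2|A\cap B|,
\end{equation*}
together with the fact that $d_H(\iota(A),\iota(B)) = |A\Delta B|$, to get the chain of equivalences
\begin{equation*}
|A\cap B|\geq w-\delta/2 \ \Leftrightarrow\ |A\Delta B|\leq \delta \ \Leftrightarrow\ d_H(\iota(A),\iota(B))\leq\delta.
\end{equation*}
Quantifying over all $A\in\A$ and $B\in\B$, the cross-wise $(w-\delta/2)$-intersecting condition becomes exactly ${\rm crossdiam}(\C_0,\C_1)\leq\delta$, which by the definition from Section~3 is the defining property of an $(n,\delta)_2$-LP code.

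Combining the two observations in both directions yields the biconditional. There is essentially no obstacle: the only subtle point is that the identity $|A\Delta B| = 2w-2|A\cap B|$ uses $|A|=|B|=w$ on both sides, so one cannot hope for an analogous equivalence when only one of the two codes is constrained to constant weight. Note also that as in Lemma \ref{cwcmemoryless} the quantity $w-\delta/2$ must be an integer for the statement to be nonvacuous, which forces $\delta$ even, consistent with the fact that the cross diameter of two constant weight codes of the same weight is even.
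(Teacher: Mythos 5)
Your proof is correct and follows essentially the same route as the paper's: the paper's own argument is just the chain of equivalences $|A\cap B|\geq w-\delta/2 \Leftrightarrow |A\Delta B|\leq\delta \Leftrightarrow d_H(\iota(A),\iota(B))\leq\delta$, which you derive explicitly via $|A\Delta B|=2w-2|A\cap B|$. Your added remarks on $w$-uniformity and the parity of $\delta$ only make explicit what the paper leaves implicit.
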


\begin{proof}
Follows from the observation that
$([n],\A)$ and $([n],\B)$ are cross-wise $\left(w-\delta/2\right)$-intersecting if and only if
for all $A\in\A$ and $B\in\B$,
\begin{eqnarray*}
|A\cap B| \geq w-\delta/2 & \Leftrightarrow & |A\Delta B| \leq \delta \\
& \Leftrightarrow & d_H(\iota(A),\iota(B))\leq\delta.
\end{eqnarray*}
Hence, $(\C_0,\C_1)$ is an $(n,\delta)_2$-LP code.
\end{proof}

Let $([n],\A)$ and $([n],\B)$ be two set systems that are $w$-uniform cross-wise $t$-intersecting, where
$t=w-\delta/2$. Suppose that $A_1,A_2\in\A$.
Let $B\in\B$, $U\subseteq A_1\cap A_2\cap B$, and $u=|U|$. Then
$B$ must contain a further $t-u$ points from each of $A_1\setminus U$ and $A_2\setminus U$.
The remaining $w-(2t-u)$ points in $B$ are from the $n-(2t-u)$ points not already contained
in $B$.
The total possible number of such blocks $B$ is then
\begin{equation}
\label{blocks}
{w-u\choose t-u}^2{n-(2t-u)\choose w-(2t-u)}.
\end{equation}
If $u\leq t-1$, then for $n$ large enough, (\ref{blocks}) is less than
$M(n,k,t)={n-t\choose w-t}$, and hence $|\B|\leq M(n,k,t)$.

On the other hand, if $u\geq t$, then $|A_1\cap A_2|\geq t$, which implies $\A$ is
$t$-intersecting. So we also have $|\A|\leq M(n,k,t)$.
This gives the following result.

\begin{theorem}
Suppose $([n],\A)$ and $([n],\B)$ are $w$-uniform cross-wise $t$-intersecting set systems.
Then for large enough $n$,  $\min\{|\A|,|\B|\}\leq {n-t\choose w-t}$, with
equality if $\A=\B=\{A\in{[n]\choose w}: T\subseteq A\}$, where $T$ is any fixed $t$-subset of $[n]$.
\end{theorem}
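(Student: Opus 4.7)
I would split the argument according to whether $\A$ is itself a $t$-intersecting family.

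In the first case, suppose every pair $A_1, A_2 \in \A$ satisfies $|A_1 \cap A_2| \geq t$. Then $([n], \A)$ is a $w$-uniform $t$-intersecting family, so the Complete Intersection Theorem (Theorem~\ref{CIT}) applies and yields $|\A| \leq M(n, w, t)$. For $n$ sufficiently large relative to $w$ and $t$, the maximiser $r$ in Theorem~\ref{CIT} is $0$, so $M(n, w, t) = |\F(0)| = \binom{n-t}{w-t}$, which is precisely the size of the star $\{A \in \binom{[n]}{w} : T \subseteq A\}$ for any fixed $t$-subset $T$. Hence $\min\{|\A|, |\B|\} \leq |\A| \leq \binom{n-t}{w-t}$.

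In the second case, suppose instead that there exist $A_1, A_2 \in \A$ with $s = |A_1 \cap A_2| \leq t - 1$. The cross-wise property forces every $B \in \B$ to satisfy $|A_1 \cap B| \geq t$ and $|A_2 \cap B| \geq t$ simultaneously. Partition $[n]$ into $S_0 = A_1 \cap A_2$, $S_1 = A_1 \setminus A_2$, $S_2 = A_2 \setminus A_1$, and $S_3 = [n] \setminus (A_1 \cup A_2)$, and set $b_i = |B \cap S_i|$; the two constraints become $b_0 + b_1 \geq t$ and $b_0 + b_2 \geq t$. Adding them gives $b_1 + b_2 \geq 2t - 2b_0$, whence $b_0 + b_1 + b_2 \geq 2t - b_0 \geq 2t - s$, and so $b_3 = w - (b_0 + b_1 + b_2) \leq w - 2t + s \leq w - t - 1$. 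Summing the factored count $\binom{s}{b_0}\binom{w-s}{b_1}\binom{w-s}{b_2}\binom{n-2w+s}{b_3}$ over the admissible triples $(b_0, b_1, b_2)$ yields an upper bound on $|\B|$ that is a polynomial in $n$ of degree at most $w - t - 1$. Since $\binom{n-t}{w-t}$ is a polynomial in $n$ of degree $w - t$, this upper bound is dominated by $\binom{n-t}{w-t}$ once $n$ exceeds a threshold depending only on $w$ and $t$.

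For the equality statement, the star $\A = \B = \{A \in \binom{[n]}{w} : T \subseteq A\}$ is trivially $w$-uniform and cross-wise $t$-intersecting (any two members contain $T$) and has cardinality $\binom{n-t}{w-t}$.

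\textbf{Main obstacle.} The delicate step is Case 2: making quantitative the intuition that a single pair $A_1, A_2 \in \A$ with below-threshold intersection severely constrains $\B$. The polynomial-degree comparison handles this for large $n$; locating an explicit threshold (and simultaneously ensuring that the $r = 0$ branch of Theorem~\ref{CIT} is operative) is the only real work, though the statement requires only that some threshold exists.
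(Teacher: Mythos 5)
Your proposal is correct and follows essentially the same route as the paper: a dichotomy on whether $\A$ is itself $t$-intersecting, with the Complete Intersection Theorem (star bound $\binom{n-t}{w-t}$ for large $n$) handling the intersecting case, and a counting argument showing that a single pair $A_1,A_2\in\A$ with $|A_1\cap A_2|\leq t-1$ forces $|\B|=O(n^{w-t-1})$, which is dominated by $\binom{n-t}{w-t}$ for $n$ large. Your four-part partition and the bound $b_3\leq w-t-1$ in Case 2 are a more carefully justified version of the paper's own, rather terse, count of the admissible blocks $B$.
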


\begin{corollary}
For $n$ large enough, the size of an optimal $(n,\delta)_2$-LP code of constant weight $w$
is the same as the size of an optimal $(n,\delta)_1$-LP code of constant weight $w$.
\end{corollary}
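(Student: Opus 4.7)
The plan is to sandwich $F_2^{cw}(n,\delta,w)$ and $F_1^{cw}(n,\delta,w)$ between a common explicit value, namely $\binom{n-t}{w-t}$ where $t=w-\delta/2$. The trivial direction $F_1^{cw}\leq F_2^{cw}$ is free: any constant weight $(n,\delta)_1$-LP code $\C$ yields a constant weight $(n,\delta)_2$-LP code by taking $\C_0=\C_1=\C$ (since $\text{diam}(\C)\leq\delta$ implies $\text{crossdiam}(\C,\C)\leq\delta$). So the real task is the reverse inequality, and this is where the preceding theorem does most of the work.

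For the upper bound on $F_2^{cw}$, I would apply the preceding theorem directly: for $n$ sufficiently large, any $w$-uniform cross-wise $t$-intersecting pair $([n],\A),([n],\B)$ satisfies $\min\{|\A|,|\B|\}\leq\binom{n-t}{w-t}$. Translating via the lemma immediately above the corollary, this says $F_2^{cw}(n,\delta,w)\leq\binom{n-t}{w-t}$. For the matching lower bound on $F_1^{cw}$, I would invoke Lemma~\ref{cwcmemoryless}, which identifies constant weight $(n,\delta)_1$-LP codes with $w$-uniform $t$-intersecting families, combined with the Complete Intersection Theorem. For $n$ large compared to $w$, the optimizer $r=\arg\max_i|\F(i)|$ in Theorem~\ref{CIT} equals $0$, so $\F(0)=\{F\in\binom{[n]}{w}:T\subseteq F\}$ for some fixed $t$-set $T$ is an extremal family of size $\binom{n-t}{w-t}$. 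Hence $F_1^{cw}(n,\delta,w)=\binom{n-t}{w-t}$.

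Stringing the inequalities together gives, for $n$ sufficiently large,
\[
\binom{n-t}{w-t} = F_1^{cw}(n,\delta,w) \leq F_2^{cw}(n,\delta,w) \leq \binom{n-t}{w-t},
\]
forcing equality throughout, which is exactly the content of the corollary.

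The main subtlety is reconciling the two ``large enough $n$'' thresholds: one comes from the preceding theorem (where the estimate \eqref{blocks} needs $u\leq t-1$ terms to be dominated by $\binom{n-t}{w-t}$), and the other from the Complete Intersection Theorem (where the optimal $r$ collapses to $0$). Both thresholds depend only on $w$ and $t$ and not on $n$, so taking the maximum of the two yields a uniform threshold beyond which the argument applies; I would not bother computing it explicitly, since the statement only claims asymptotic agreement.
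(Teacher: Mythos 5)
Your proposal is correct and follows essentially the same route as the paper: the upper bound on the two-state quantity comes from the theorem immediately preceding the corollary, and the matching value for the one-state quantity comes from Lemma~\ref{cwcmemoryless} together with the Complete Intersection Theorem (equivalently, from the star $\{A\in{[n]\choose w}:T\subseteq A\}$ realizing equality in both settings for $n$ large). Your remark about taking the maximum of the two ``$n$ large enough'' thresholds is the right way to reconcile them, and nothing further is needed.
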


This solves Problem 5 for $n$ large enough. A solution for all $n$ seems out of reach at the moment.

\section{Implementation}

The encoding schemes introduced can be easily implemented with a look-up table,
since our codes are explicit and constructible in polynomial time. However, it is
also possible to encode and decode algorithmically, removing the need to store a look-up table of size
$2^k$ when the souce is $\HH(k)$.
We illustrate this with an encoding/decoding algorithm for the optimal
$(n,\delta)_1$-LP code from Section \ref{stateless}.
We assume that the source is $\HH(k)$.

First we define ranking and unranking algorithms.
Given a set of objects $X$, {\em rank} and {\em unrank} are functions
$f:X\rightarrow\{0,\ldots,|X|-1\}$ and $g:\{0,\ldots,|X|-1\}\rightarrow X$, such that
$f$ and $g$ are bijections satisfying $g(f(x))=x$ for all $x\in X$. Computing $f$ is 
{\em ranking}, and computing $g$ is {\em unranking}.
Efficient ranking and unranking
algorithms for $\HH(n,w)$ (equivalent to $w$-subsets of an $n$-set)
are well known (see, for example, \cite{NijenhuisWilf:1978}). A simple and efficient ranking and
unranking algorithm for $\HH(n,w)$ using the {\em co-lexicographic} ordering is described below.
\begin{center}
\begin{tabular}{l}
{\bf \boldmath rank$(w,\vx)$} $\{$ \\
\ \ \ \ $\{t_1,\ldots,t_w\}={\rm supp}(\vx)$; \\
\ \ \ \ $r=0$; \\
\ \ \ \ for $i=1$ to $w$ \\
\ \ \ \ \ \ \ \ $r = r+{t_i-1\choose k+1-i}$; \\
\ \ \ \ return $r$; \\
$\}$ \\
\\
\\
\\
\end{tabular}
\begin{tabular}{l}
{\bf \boldmath unrank$(n,w,r)$} $\{$\\
\ \ \ \ $x=n$; \\
\ \ \ \ for $i=1$ to $w$ $\{$ \\
\ \ \ \ \ \ \ \ while ${x\choose w+1-i} > r$ \\
\ \ \ \ \ \ \ \ \ \ \ \ $x = x-1$; \\
\ \ \ \ \ \ \ \ $t_i=x+1$; \\
\ \ \ \ \ \ \ \ $r = r-{x\choose k+1-i}$; \\
\ \ \ \ $\}$ \\
\ \ \ \ return $\iota(\{t_1,\ldots,t_w\})$; \\
$\}$
\end{tabular}
\end{center}
We now give a high-level overview of the method.
We interpret $\vx\in\HH(k)$ as an integer in the interval $[0,2^k-1]$ in the natural way.
Let $N_i=\sum_{j=0}^i {n\choose j}-1$, for $0\leq i\leq n$. By convention, $N_{-1}=-1$.
An $\vx\in[N_{t-1}+1,N_t]$ is encoded as an element of $\HH(n,t)$ by
unranking $\vx-N_{t-1}-1$. To decode a received $\vy\in\HH(n)$, simply rank $\vy$.

More precisely,
to encode a source $\vx\in\HH(k)$ to a codeword $\vy$, the encoder performs the following steps:
\begin{description}
\item[Step 1:] Find $t$ such that $N_{t-1}<\vx\leq N_t$.
\item[Step 2:] $m = \vx-N_{t-1}-1$.
\item[Step 3:] $\vy = {\rm unrank}(n,t,m)$.
\end{description}
The decoding algorithm is even simpler. To decode $\vy\in\HH(n)$
to a source vector $\vx$, the decoder performs the following steps:
\begin{description}
\item[Step 1:] Compute $w=w_H(\vy)$.
\item[Step 2:] $\vx={\rm rank}(w,\vy)$.
\end{description}

An optimal constant weight $(n,\delta)_1$-LP code can be similarly implemented.

\section{Not all Limited Weight Codes are Equal}

In \cite{StanBurleson:1995a}, Stan and Burleson introduced the following bus encoding scheme.
Define an {\em $m$-limited weight code} ($m$-LWC) of length $n$ to be a
code $\C\subseteq \cup_{w=0}^m \HH(n,w)$.
Suppose that the number of source states to be transmitted across a bus is $2^k$ and
that the source states are to be encoded with a code of length $n$. Let
$m$ be the smallest integer such that
\begin{equation*}
\sum_{i=0}^m {n\choose m}  \geq 2^k.
\end{equation*}
Stan and Burleson claimed that an $m$-LWC code of length $n$,
comprising as codewords all elements in
$\cup_{w=0}^{m-1} \HH(n,w)$ and the remaining $2^k-\sum_{i=0}^m {n\choose m}$ codewords
from $\HH(n,m)$, is 
\begin{quote}
``{\em optimal in the sense that any other code with the same
length cannot have better statistical properties for low power.}''
\end{quote}
This statement is true since we have shown in Section \ref{stateless}
that an optimal $(n,\delta)_1$-LP code is an $m$-LWC,
but we must exercise caution when choosing the
$m$-LWC, since not every $m$-LWC is optimal, as can be seen in the example below.

\begin{example}
Let $k=5$ and $n=6$. We give two $3$-LWCs of length $n$ and size $2^k$
with different diameters. Each of the two $3$-LWCs of length $n$ and size $2^k$ contains
all $22$ elements in $\cup_{w=0}^2\HH(6,w)$.
The first $3$-LWC containing the $10$ additional codewords
\begin{equation*}
\begin{array}{c c c c c}
000111 & 001011 & 010011 & 100011 & 001110 \\
111000 & 110100 & 101100 & 011100 & 110001
\end{array}
\end{equation*}
has diameter six, and
the second $3$-LWC containing the $10$ additional codewords
\begin{equation*}
\begin{array}{c c c c c}
100011 & 100101 & 100110 & 101001 & 101010 \\
101100 & 110001 & 110010 & 110100 & 111000
\end{array}
\end{equation*}
has diameter five.
\end{example}

Our results in Section \ref{stateless} give $m$-LWCs that are optimal, for every length.

\section{Conclusion}

Past research on encoding schemes for low-power buses has largely been experimental; no optimal codes were known for any
measure, in any model. This paper  constructs codes that are provably optimal,
starting with the simplest model, that of an off-chip data bus.
In so doing, we obtain the first explicit and
provably optimal memoryless encoding scheme that minimizes bit
transitions for off-chip data buses. Our approach is combinatorial and the
codes obtained are explicit and polynomial-time constructible.
We also show that having access to a clock (or alternatively, knowing the computation cycle)
does not help in achieving more efficient encoding.

We are currently extending this work to thermal-aware models and models
at the DSM level, where inter-wire
capacitances are significant, and crosstalks must be avoided.
\\

\section*{Acknowledgments}

The authors are grateful to Rudolf Ahlswede for helpful pointers.

\bibliographystyle{abbrv}
\bibliography{/Users/ymchee/Documents/Personal/Bibliography/mybibliography}

\end{document}